\documentclass[runningheads]{llncs}
\usepackage{graphicx}
\usepackage{epstopdf}% To incorporate .eps illustrations using PDFLaTeX, etc.
\usepackage{multirow}
\usepackage{algorithm}
\usepackage{amssymb}
\usepackage[compatible]{algpseudocode}

\begin{document}

\title{Approximation Algorithms for Two-Bar Charts Packing Problem\thanks{The study was carried out within the framework of the state contract of the Sobolev Institute of Mathematics (project no. 0314--2019--0014).}}

\author{Adil Erzin\inst{1,2}\orcidID{0000-0002-2183-523X} \and Georgii Melidi\inst{2} \and Stepan
Nazarenko\inst{2} \and Roman Plotnikov\inst{1}\orcidID{0000-0003-2038-5609}}

\authorrunning{A. Erzin et al.}

\institute{Sobolev Institute of Mathematics, SB RAS, Novosibirsk 630090, Russia \and
Novosibirsk State University, Novosibirsk 630090, Russia\\
\email adilerzin@math.nsc.ru}

\maketitle              % typeset the header of the contribution

\begin{abstract}
In the Two-Bar Charts Packing Problem (2-BCPP), it is required to pack the bar charts (BCs) consisting of two bars into the horizontal unit-height strip of minimal length. The bars may move vertically within the strip, but it is forbidden to change the order and separate the chart's bars. Recently, for this new problem, which is a generalization of the Bin Packing Problem (BPP), Strip Packing Problem (SPP), and 2-Dimensional Vector Packing Problem (2-DVPP), several approximation algorithms with guaranteed estimates were proposed. However, after a preliminary analysis of the solutions constructed by approximation algorithms, we discerned that the guaranteed estimates are inaccurate. This fact inspired us to conduct a numerical experiment in which the approximate solutions are compared to each other and with the optimal ones. To construct the optimal solutions or lower bounds for optimum, we use the Boolean Linear Programming (BLP) formulation of 2-BCPP proposed earlier and apply the CPLEX package. We also use a database of instances for BPP with known optimal solutions to construct the instances for the 2-BCPP with known minimal packing length. The results of the simulation make up the main content of this paper. \keywords{Bar charts \and Strip packing \and Approximation algorithms \and Simulation}
\end{abstract}

\section{Introduction}
In \cite{Erzin20_1}, we studied the problem of optimizing an investment portfolio in the oil and gas field. Each project is characterized by the annual volume of hydrocarbon production, adequately displayed with bar charts. Each year, the total production volume of all projects must not exceed a given value (which may be a throughput of a pipe). The problem is to complete all projects as early as possible. This problem is a special case of the resource-constrained project scheduling problem with one renewable resource \cite{Brucker06}. In this case, the height of the BC's bar corresponds to the value of the consumed resource. Since the Bar Charts Packing Problem (BCPP) is intractable, we investigate a slight generalization of the Bin Packing Problem (BPP) when each BC consists of two bars. Let us denote BC consisting of $k$ bars as $k$-BC and the problem under consideration as 2-BCPP.

If this is not confusing, we will use the terms ``width'' and ``length'' interchangeably, implying horizontal dimensions. The 2-BCPP can be formulated as follows. We are given a set of 2-BCs. The height of each 2-BC's bar does not exceed 1. In a \emph{feasible} packing, the bars of each 2-BC do not change order and occupy adjacent cells; however, they can move vertically within the strip independently. In the 2-BCPP, it is required to find a feasible packing of the 2-BCs of minimal length. If we split the strip into equal unit-width cells of height 1, then the packing length is the number of cells in which there is at least one bar.

\subsection{Related results}
The 2-BCPP was first formulated in \cite{Erzin20_1} and then examined in \cite{Erzin20_2,Erzin20_3,Erzin20_4}, where the similar problems were described. These problems are the Bin Packing Problem (BPP) \cite{Johnson73}, the Strip Packing Problem (SPP)\cite{Baker80,Coffman80}, and the 2-D Vector Packing Problem (2-DVPP) \cite{Kellerer03}.

In the BPP, a set of items $L$ with given sizes is necessary to pack in the minimal number of unit-size bins. This problem is NP-hard. However, several approximation algorithms are known. One of them is the First Fit Decreasing (FFD). Items are numbered in non-increasing order, and the current item is placed in the first suitable bin.  It was proved that the FFD uses no more than $11/9\ OPT(L)+4$ bins \cite{Johnson73}, where $OPT(L)$ is the minimal number of bins to pack the items from $L$. Later the additive constant was reduced to 3 \cite{Baker85}, then it was reduced to 1 \cite{Yue91}, in 1997 to 7/9 \cite{Li97}, and finally, in 2007, the exact value equal to 6/9 of the additive constant was found \cite{Dosa07}. For the Modified First Fit Decreasing (MFFD) algorithm, it was shown that $MFFD(L)\leq 71/60\ OPT(L)+31/6$ \cite{Johnson85}. This estimate was improved to $71/60\ OPT(L)+1$ \cite{Yue95}.

In the SPP, it is necessary to pack (without rotation) a set of rectangles $R$ into the strip of minimal length. The Bottom-Left algorithm arranges rectangles in the descending order of height and yields a 3--approximate solution \cite{Baker80}. Then an algorithm with the ratio of 2.7  was proposed \cite{Coffman80}. In \cite{Sleator80} an algorithm that builds a 2.5--approximate solution was proposed. Later the ratio was reduced in \cite{Schiermeyer94}, and \cite{Steinberg97} to 2. The smallest known estimate for the ratio is $(5/3+\varepsilon)OPT(R)$, for any $\varepsilon > 0$ \cite{Harren14}.

2-DVPP is a generalization of BPP and a special case of 2-BCPP. It considers two attributes for each item and bin. The problem is to pack all items in the minimum number of bins, considering both attributes of the bin's capacity limits. In \cite{Kellerer03} a 2--approximation algorithm for 2-DVPP was presented. \cite{Christensen17} presents a survey of approximation algorithms for 2-DVPP. The best algorithm yields a $(3/2+\varepsilon)$--approximate solution, for any $\varepsilon >0$ \cite{Bansal16}.

In \cite{Erzin20_2}, we proposed an $O(n^2)$--time algorithm that builds a packing for $n$ 2-BCs, which length does not exceed $2OPT+1$, where $OPT$ is the minimum packing length for 2-BCPP. When at least one bar's height of each 2-BC is more than 1/2 (such 2-BCs we called ``big''), an $O(n^3)$--time 3/2--approximation algorithm was proposed. When each 2-BC is big and additionally non-increasing or non-decreasing, the complexity was reduced to $O(n^{2.5})$ preserving the ratio \cite{Erzin20_3}. The paper \cite{Erzin20_4} updates the estimates for the packing length of big 2-BCs, keeping time complexity. In \cite{Erzin20_4}, we improve the ratio and give a 5/4--approximation $O(n^{2.5})$--time algorithm for packing big non-increasing or non-decreasing 2-BCs. For the case of big 2-BCs (not necessarily non-increasing or non-decreasing), we proposed a $16/11$--approximation $O(n^3)$--time algorithm.

\subsection{Our contribution}
The main goal of this paper is a posteriori analysis of the previously developed algorithms. For this, we implement the approximation algorithms and conduct a simulation. To find an optimum or a lower bound for the packing length, we use the CPLEX package for the Boolean Linear Programming (BLP) problem. We also use a database of instances for BPP with known optimal solutions, from where we build instances for the 2-BCPP with known optimums.

\vspace{0.5cm}
The rest of the paper is organized as follows. Section 2 provides a statement of the 2-BCPP and the necessary definitions. In Section 3, we describe the algorithms under consideration and some properties. Section 4 is devoted to describing the numerical experiment results, and the last section concludes the paper.

\section{Formulation of the problem}
Let a semi-infinite unit-height horizontal strip be located on the plane so that its lower boundary coincides with the horizontal axis and its beginning is in origin. For each 2-BC $i$ from the given set $S$, $|S|=n$, consisting of two unit-width bars, the height of the first (left) bar is $a_i\in (0,1]$ and of the second (right) is $b_i\in (0,1]$. Let us split the strip into identical unit-width and unit-height rectangles, which we call the ``cells'', and number them with naturals starting from the beginning of the strip.

\begin{definition}
The \emph{packing} is a function $p:S\rightarrow\mathbb{Z}^+$ that assigns to each 2-BC $i$ a cell number $p(i)$ where its first bar falls. The packing is \emph{feasible} if the sum of the bar's heights that fall into each cell does not exceed 1.
\end{definition}

As a result of a packing $p$, the first bar of $i$th 2-BC falls into the cell $p(i)$ and the second bar falls into the cell $p(i)+1$. We will consider only feasible packings; therefore, the word ``feasible'' will be omitted.

\begin{definition}
The packing \emph{length} $L(p)$ is the number of strip cells in which at least one bar falls.
\end{definition}

In \cite{Erzin20_2}, we formulated 2-BCPP in the form of BLP. However, since in this paper we use the CPLEX package for the BLP to get optimal packings, we replicate its statement below for the reader's convenience. For this purpose, we introduce the variables:\\
$$
x_{ij}=\left\{
            \begin{array}{ll}
              1, & \hbox{if the first bar of the $i$th 2-BC is placed into the cell $j$;} \\
              0, & \hbox{else.}
            \end{array}
          \right.
$$
$$
y_j=\left\{
            \begin{array}{ll}
              1, & \hbox{if the cell $j$ contains at least one bar;} \\
              0, & \hbox{else.}
            \end{array}
          \right.
$$
Then 2-BCPP in the form of BLP is as follows.
\begin{equation}\label{e1}
  \sum\limits_j y_j \rightarrow\min\limits_{x_{ij},y_j\in\{0,1\}};
\end{equation}
\begin{equation}\label{e2}
  \sum\limits_j x_{ij} =1,\ i\in S;
\end{equation}
\begin{equation}\label{e3}
  \sum\limits_i a_ix_{ij} + \sum\limits_k b_kx_{k j-1}\leq y_j,\ \forall j.
\end{equation}

In this formulation, criterion (\ref{e1}) is the minimization of the packing length. Constraints (\ref{e2}) require each 2-BC to be packed into a strip once. Constraints (\ref{e3}) ensure that the sum of the bar's heights in each cell does not exceed 1.

The 2-BCPP (\ref{e1})-(\ref{e3}) is strongly NP-hard as a generalization of the BPP \cite{Johnson73}. Moreover, the problem is $(3/2-\varepsilon)$-inapproximable for any $\varepsilon >0$ unless P=NP \cite{Vazirani01}.

\section{Algorithms}
In \cite{Erzin20_2}, we had performed an analysis of some approximation algorithms. As a result of the simulation, we concluded that the greedy algorithm $GA\_LO$ with the preliminary lexicographic ordering of 2-BCs in a non-increasing manner significantly outperforms the other studied approximation algorithms. Therefore, in this article, for comparison with new algorithms, we use only one previously analyzed algorithm $GA\_LO$ \cite{Erzin20_2}.

New approximation algorithms were proposed in \cite{Erzin20_3} and \cite{Erzin20_4}. Below is a brief description of the algorithms we analyze in this paper, illustrated by an example.

\subsection{Greedy algorithm $GA\_LO$}
The algorithm $GA\_LO$ is described in detail (with pseudocode) in \cite{Erzin20_2}. Here is its short description. The algorithm sorts all 2-BCs lexicographically in non-increasing order of bar's height (Fig. \ref{fig_GA_LO}$a$). Let list $P$ be a lexicographically ordered set of 2-BCs. The first element in $P$ is placed in the first two strip's cells and removed from $P$. The 2-BCs deleted from $P$ never relocate further. Then until $P\neq\emptyset$, the following procedure is performed. For each 2-BC in $P$, search the leftmost position that does not violate the packing feasibility. Among 2-BCs that could be placed into the same leftmost cell, choose 2-BC with a minimal number, fix its position in the strip and remove it from $P$. For illustration, the resulting packing of the example in Fig. \ref{fig_GA_LO}$a$ is presented in Fig. \ref{fig_GA_LO}$b$.

\begin{figure}
\centering
\includegraphics[bb= 0 0 700 350, clip, scale=0.4]{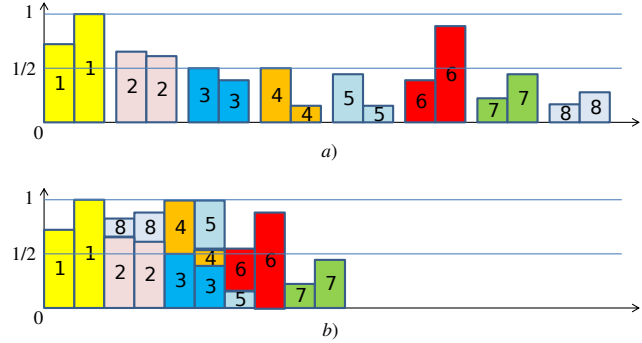}
\caption{$a$) Lexicographically ordered set of 2-BCs; $b$) Packing built by $GA\_LO$.} \label{fig_GA_LO}
\end{figure}

In \cite{Erzin20_2}, we proved that algorithm $A$, which uses $GA\_LO$ as a procedure, constructs a packing, whose length is at most $2OPT+1$, where $OPT$ is optimum. However, a posteriori analysis shows that algorithm $GA\_LO$ itself frequently constructs a near-optimal solution.

\subsection{Algorithms $M_w$ and $M1_w$}
Recall that for any natural $k$, we denote by $k$-BC a BC containing $k$ bars.

\begin{definition}
 Two arbitrary BCs create a $t$-union if they are packed so that $t$ cells of the strip contain the bars of both BCs.
\end{definition}

It follows from the definition that if two BCs consist of $x$ bars, their $t$-union (new BC) has $x-t$ bars. Thus, for example, in Fig. \ref{fig_GA_LO}$b$  the 2nd and the 8th 2-BCs form a 2-union, and the 5th and the 6th 2-BCs form a 1-union.

\begin{definition}
 If at least one bar in 2-BC is higher than 1/2, then such 2-BC we call \emph{big}. Consequently, we also call a bar \emph{big} if its height is more than 1/2. Otherwise, let us call a bar \emph{small}.
\end{definition}

The algorithm $M_w$ was described in detail in \cite{Erzin20_3}. It performs a sequence of steps. At each step, a max-weight matching is built in a specially constructed weighted graph. Initially, using the set $S$, we build a weighted graph $G_1=(V_1, E_1)$, in which the vertices are the images of 2-BCs in $S$ ($|V_1|=|S|=n$). The edge  $(i,j)\in E_1$ if the $i$th and $j$th 2-BCs can form a $t$-union ($t\in\{1,2\}$). The edge's weight equals 2 if the $i$th and $j$th 2-BCs can form a 2-union. If the $i$th and $j$th 2-BCs cannot form a 2-union, but can form a 1-union, then the weight of edge $(i,j)$ equals 1. Then in the graph $G_1$, a max-weight matching is constructed. The edges in the matching and their weights indicate the unions of 2-BCs. As a result of these unions, we get a new set of  2- and 3-BCs, which are the prototypes of vertices forming the set $V_2$ of the new weighted graph $G_2=(V_2, E_2)$. The edge $(i,j)$ exists in $G_2$ if the $i$th and $j$th BCs can form a union. If the $i$th and $j$th BCs can form a $t$-union with different $t>0$, we assign to $(i,j)$ the weight equal to the maximal $t$. At an arbitrary step in the corresponding graph $G_k$, we construct the next max-weight matching. The algorithm stops when in the next graph $G_{k+1}$, there are no edges.

\begin{theorem} \cite{Erzin20_3}
 The time complexity of algorithm $M_w$ is $O(n^4)$, and if all 2-BCs are big, it constructs a 3/2--approximate solution to the 2-BCPP.
\end{theorem}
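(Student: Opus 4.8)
The plan is to establish the two assertions — the $O(n^4)$ running time and the $3/2$ approximation guarantee for big 2-BCs — separately. For the running time I would bound the cost of a single iteration and the number of iterations. The graph $G_k$ has at most $n$ vertices, hence $O(n^2)$ candidate edges, and for each pair of BCs deciding whether they admit a $t$-union and with which largest $t$ reduces to comparing a constant number of sums of bar heights, i.e. takes $O(1)$ time; so building $G_k$ costs $O(n^2)$. A maximum-weight matching in a graph on at most $n$ vertices is computable in $O(n^3)$ time by Edmonds' blossom algorithm, which dominates the per-iteration cost. Finally, as long as $G_k$ has an edge the matching returned is non-empty, and each matched edge fuses two BCs into one, so the number of BCs strictly decreases from one iteration to the next; hence there are at most $n-1$ iterations and the total time is $O(n)\cdot O(n^3)=O(n^4)$.

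For the ratio I would first record an accounting identity. A $k$-BC occupies exactly $k$ contiguous cells, and a $t$-union of a $k_1$-BC with a $k_2$-BC is a $(k_1+k_2-t)$-BC; so, starting from $n$ 2-BCs spanning $2n$ cells and fusing along the matching edges, an edge of weight $t$ deletes exactly $t$ cells from the running total. Therefore $M_w(S)=2n-W$, where $W=\sum_k w(M_k)$ is the total weight collected over all iterations, and likewise $OPT=2n-T^{*}$, where $T^{*}$ is the total overlap of a fixed optimal packing (the number of cells, counted with multiplicity, carrying more than one bar); since $OPT$ minimises the length, $T^{*}$ is the maximum possible total overlap. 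Because every 2-BC has a big bar and two big bars cannot share a cell, every feasible packing uses at least $n$ cells, i.e. $OPT\ge n$, equivalently $T^{*}\le n$. Consequently it suffices to prove $W\ge T^{*}/2$, for then $M_w(S)=2n-W\le 2n-T^{*}/2=n+OPT/2\le OPT+OPT/2=\frac{3}{2}\,OPT$.

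To lower-bound $W$ it is enough, since $W\ge w(M_1)$ and $M_1$ is a maximum-weight matching in $G_1$, to exhibit one matching $M$ of $G_1$ with $w(M)\ge T^{*}/2$, and I would read it off the optimal packing. Join two 2-BCs whenever they share a cell in that packing: two big 2-BCs sharing both of their cells form a 2-union, which is an edge of $G_1$ of weight $2$, and every other overlap is a single shared cell, a $G_1$-edge of weight at least $1$; moreover a chain of single-cell overlaps is laid out monotonically from left to right along the strip, so it cannot close into a cycle. In the benign case where no cell carries three or more bars, each 2-BC has at most one partner on each of its two cells, so the 2-unions are isolated edges and the single-cell overlaps form disjoint paths; writing $P$ for the number of 2-unions and $Q$ for the number of single-overlap edges we get $T^{*}=2P+Q$, and taking every other edge along each path together with all $P$ 2-unions produces a matching $M\subseteq G_1$ with $w(M)\ge 2P+Q/2\ge T^{*}/2$.

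The step I expect to be the main obstacle is precisely the structural reduction used in the last paragraph, because ``at most two bars per cell'' cannot simply be assumed: for the three 2-BCs $(0.55,0.3)$, $(0.2,0.6)$, $(0.6,0.2)$ every length-$3$ packing — and length $3$ is optimal — has a cell with three bars, so the ``isolated $2$-unions plus disjoint paths'' picture must be recovered even for genuinely dense cells. The way around this is a charging argument: a cell carrying $b\ge 2$ bars contributes $b-1$ to $T^{*}$, and one distributes these $b-1$ units among $G_1$-edges — a single $2$-union edge, or a pair of single-overlap edges that can sit together in one matching — while choosing the matching $M$ so that every edge of $M$ of weight $t$ is charged at most $2t$ units; this yields $w(M)\ge T^{*}/2$. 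Carrying out this distribution uniformly, exploiting that a cell containing a big bar leaves under $1/2$ of its capacity for the other bars and that the partners of those (necessarily small) bars are big and occupy the two neighbouring cells, is the only genuinely delicate part; everything else is the bookkeeping above.
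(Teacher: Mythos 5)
This theorem is imported from \cite{Erzin20_3}; the present paper gives no proof of it, so there is no in-paper argument to compare against. Judged on its own, your proposal gets the complexity half essentially right (per-iteration cost dominated by an $O(n^3)$ max-weight matching, at most $n-1$ iterations because every nonempty matching strictly decreases the number of BCs), with one small caveat: in later iterations the vertices are $k$-BCs with $k$ possibly large, so the ``$O(1)$ per pair'' claim for edge construction needs to be replaced by an aggregate bound such as $\sum_{i<j}k_ik_j=O\bigl((\sum_i k_i)^2\bigr)=O(n^2)$; the conclusion survives. The skeleton of the ratio argument is also the right one: $M_w(S)=2n-W$, $OPT=2n-T^{*}$, $OPT\ge n$ because the $n$ big bars occupy distinct cells, and it suffices to exhibit a matching of $G_1$ of weight at least $T^{*}/2$.

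The genuine gap is exactly where you flag it: the key lemma (a matching of weight $\ge T^{*}/2$ read off an arbitrary optimal packing) is only proved under the assumption that no cell carries three or more bars, and you yourself exhibit why that assumption cannot be made. In the dense case the pairwise-overlap structure is no longer a disjoint union of isolated weight-2 edges and monotone paths --- a single cell with $b\ge 3$ bars induces $\binom{b}{2}$ overlapping pairs among 2-BCs whose other bars may point in both directions, so the degree-$\le 2$ and acyclicity arguments both fail --- and the proposed fix is only described (``distribute the $b-1$ units among $G_1$-edges so that every matched edge of weight $t$ is charged at most $2t$'') rather than carried out. Since this charging scheme is the entire content of the approximation bound, the proof is incomplete as written. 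Two remarks that may help you close it: (i) you are proving more than necessary --- the bound $M_w(S)\le\frac32 OPT$ only needs $W\ge T^{*}-\frac12 OPT$, which is weaker than $W\ge T^{*}/2$ because $T^{*}\le n\le OPT$; (ii) grouping the 2-BCs of the optimal packing by their position $p(i)$ gives $T^{*}=\sum_j(n_j+n_{j-1}-1)$ over nonempty cells, and building the candidate matching from $\lfloor n_j/2\rfloor$ intra-group 2-unions plus alternating 1-unions between leftover representatives of adjacent groups is a cleaner route than charging cell by cell.
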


From the proof of the theorem \cite{Erzin20_3} follows the

\begin{remark}
 In order to achieve the corresponding accuracy, it is sufficient to construct only the first max-weight matching. Thus, the complexity of obtaining a 3/2--approximate solution is $O(n^3)$.
\end{remark}

\begin{proposition}
  The algorithm $M_w$ builds only 1- and 2-unions.
\end{proposition}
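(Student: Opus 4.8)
The plan is to single out a simple pointwise property shared by every bar chart that arises during a run of $M_w$ (on big 2-BCs) and to show that this property makes a feasible union over three or more common cells impossible; since every union the algorithm performs is, by construction, a feasible one, this yields the statement.

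First I would record two structural facts, each by an immediate induction over the union steps. (i) Every BC $D$ occurring in the algorithm occupies a contiguous block of cells $\{L,\dots,R\}$, and its bar heights are, cell by cell, the sums of the bars of the original 2-BCs merged into $D$; write $S_D$ for that multiset. (ii) The length-$2$ cell-intervals of the members of $S_D$ form a connected family, since every union glues two subfamilies that already share a cell. Combining (i)--(ii): for each cut $j\mid j+1$ with $L\le j<R$ there is a member of $S_D$ occupying exactly cells $j$ and $j+1$, for otherwise $S_D$ would split into the intervals contained in $\{L,\dots,j\}$ and those contained in $\{j+1,\dots,R\}$ -- two disjoint, mutually non-intersecting subfamilies, contradicting connectedness. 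Finally, a $t$-union overlaps in $t$ consecutive cells lying inside both blocks, so $t$ is at most the smaller of the two bar-counts; hence a union of size $\ge 3$ can only be formed from two BCs that each have at least three bars.

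The key step is the invariant: among any three consecutive cells $j,j+1,j+2$ of any BC $D$ produced by $M_w$, at least two of the heights $D_j,D_{j+1},D_{j+2}$ exceed $1/2$. Indeed, pick $u\in S_D$ occupying $\{j,j+1\}$ and $v\in S_D$ occupying $\{j+1,j+2\}$ (they exist by the previous paragraph and are distinct). Their bars at the shared cell $j+1$ are two of the summands of $D_{j+1}\le 1$, so they sum to at most $1$ and at most one of them is $>1/2$. A short case check, using only that $u$ and $v$ are big: if $u$'s right bar is $>1/2$ then $v$'s left bar is $\le 1/2$, so $v$'s right bar is $>1/2$ and $D_{j+1},D_{j+2}>1/2$; symmetrically if $v$'s left bar is $>1/2$; and if neither of these inner bars exceeds $1/2$, then $u$'s left bar and $v$'s right bar do, so $D_j,D_{j+2}>1/2$. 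Each case gives two cells above $1/2$.

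It then remains to assemble the contradiction. If $M_w$ ever built a $t$-union with $t\ge 3$, merging $C$ and $C'$, then by the structural facts both have $\ge 3$ bars and overlap in $\ge 3$ consecutive cells; fix the first three, indexed $j,j+1,j+2$ in common coordinates. By the invariant, on these three cells $C$ has at least two heights $>1/2$ and so does $C'$; two $2$-element subsets of a $3$-element set meet, so some cell $i\in\{j,j+1,j+2\}$ carries $C_i>1/2$ and $C'_i>1/2$, and the merged chart would have height $C_i+C'_i>1$ there -- not a feasible packing. Since the weight $M_w$ puts on an edge is the largest $t$ admitting a feasible $t$-union, no edge of weight $\ge 3$ ever appears, so every max-weight matching the algorithm builds consists of $1$- and $2$-union edges only. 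I expect the hard part to be the invariant, and within it the passage from the combinatorial ``connected family of length-$2$ intervals'' to the numerical conclusion; contiguity, the bound relating $t$ to the bar-counts, and the final pigeonhole are routine. (As a sanity check one should also observe that $M_w$ genuinely can create BCs with four or more bars -- e.g. a $3$-BC from the first step may later $1$-union with a $2$-BC -- so the proposition is about the sizes of the unions, not the sizes of the produced charts.)
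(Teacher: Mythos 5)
Your argument is correct as far as it goes, but it proves a statement of narrower scope than the one the paper needs. You assume from the outset that every input 2-BC is big, and that assumption is load-bearing: the case analysis establishing your key invariant (``at least two of any three consecutive heights exceed $1/2$'') uses exactly that $u$ and $v$ each own a bar higher than $1/2$. The proposition, however, concerns $M_w$ run on an arbitrary set $S$: the paper invokes it to justify the implementation of $M_w$, which is then applied to arbitrary (not necessarily big) instances. For arbitrary 2-BCs your route cannot be repaired, because its central claim -- that a $t$-union with $t\ge 3$ is geometrically infeasible -- is simply false there: two 3-BCs with heights $(0.1,0.2,0.1)$ each can overlap in three cells with every column sum at most $1$. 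In the general case the reason $M_w$ never builds such a union is not infeasibility but the structure of the max-weight matchings: a feasible $t$-union with $t\ge 3$ forces, at the two ends of the overlap, pairs of original 2-BCs that could have formed 2-unions (weight-2 edges), and an exchange argument shows that some earlier matching would then not have been of maximum weight. That matching-based argument, which is the paper's route, is the idea missing from your proof.

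Restricted to the big case, your proof is sound and in fact sharper than the paper's: the contiguity and connectivity facts, the cut lemma producing a constituent 2-BC across every cut $j\mid j{+}1$, the bound $t\le\min$ of the two bar counts, the invariant, and the final pigeonhole all check out, and together they show that a union over three or more cells is outright infeasible for big charts rather than merely never selected by the algorithm. But as a proof of the proposition as stated and as used, it has a genuine gap.
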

\begin{proof}
Suppose that at some step, the algorithm $M_w$ builds a $t$-union with $t>2$ using some $p$-BC and $q$-BC, $p,q>2$. Initially, each BC consists of two bars, and only 1- and 2-unions are possible. Moreover, any 2-union of two 2-BCs forms a new 2-BC. Hence, at least one 1-union is needed to create later a BC with more than two bars. Therefore, the 1-unions appear during the formation of these $p$-BC and $q$-BC. Let us denote by $T$, $|T| = t$, the set of cells containing bars of both $p$-BC and $q$-BC. In the first cell of $T$, at least one bar of $p$-BC and one bar of $q$-BC are located. However, initially, each BC consists of two bars. Then at least one 2-BC of $p$-BC and one 2-BC of $q$-BC fall into the first and second cells of $T$. Otherwise, these $p$-BC and $q$-BC could not be obtained from the initial 2-BCs. Similarly, there exist two 2-BCs that fall into the $(t-1)$th and $t$th cells. This fact contradicts the $M_w$ performance since these two 2-unions should be formed before the 1-unions. Then at some step, there was a matching of non-maximum weight. Therefore, our assumption is wrong and $t$-unions with $t > 2$ are impossible. Fig. \ref{fig_proposition}$a$ shows the example of 3-union when $p,q=3$. In Fig. \ref{fig_proposition}$b$, the 1st and 3rd 2-BCs should have formed a 2-union at the previous step. The same is true for the 2nd and the 4th 2-BCs. The proposition is proved.
\end{proof}

\begin{figure}
\centering
\includegraphics[bb= 0 0 620 170, clip, scale=0.4]{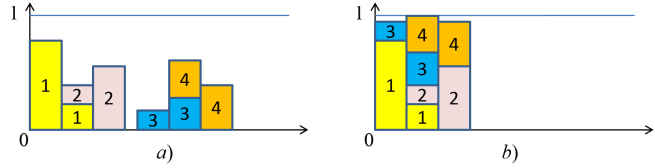}
\caption{$a$) $p$-BC and $q$-BC for 3-union when $p,q=3$; $b$) 3-union.} \label{fig_proposition}
\end{figure}

This proposition we use in the implementation of the algorithm $M_w$. That is, only 1- and 2-unions are built in this algorithm.

In the Simulation section, we also execute the algorithm $M1_w$, which constructs only the first max-weight matching in the graph $G_1$ (the first stage of algorithm $M_w$). Both algorithms $M1_w$ and $M_w$ have the ratio at most 3/2, but on average, these algorithms are more accurate in practice.

\begin{figure}
\centering
\includegraphics[bb= 0 40 700 350, clip, scale=0.4]{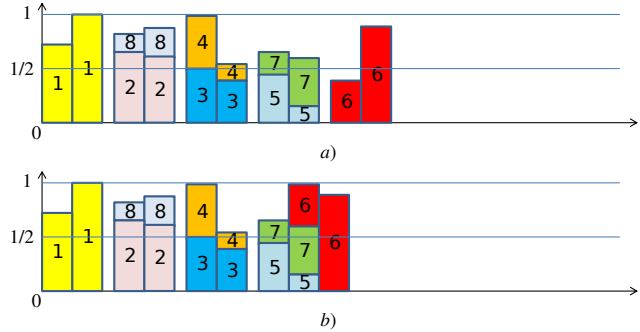}
\caption{$a$) Packing built by $M1_w$; $b$) Packing built by $M_w$.} \label{fig_M_w}
\end{figure}

Fig. \ref{fig_M_w}$a$ shows the first matching, which is the result of the algorithm $M1_w$. Fig. \ref{fig_M_w}$b$ shows the packing constructed by the algorithm $M_w$.

\subsection{Algorithms $A1$ and $A2$}
In \cite{Erzin20_4} an $O(n^{2.5}$)--time 5/4--approximation algorithm for packing non-increasing (or non-decreasing) big 2-BCs is proposed. The algorithm is based on the reduction of 2-BCPP to the Maximum Asymmetric Traveling Salesman Problem with edge's weights 0 or 1 (MaxATSP(0,1)) and using the algorithm proposed in \cite{Blaser04,Paluch18} for the latter problem. Furthermore, in \cite{Erzin20_4} an $O(n^3)$--time 16/11--approximation algorithm for the packing big (not necessary non-increasing or non-decreasing) 2-BCs is presented. To obtain this estimate, the algorithm for construction a max-cardinality matching \cite{Gabow83}, and approximation algorithm proposed in \cite{Paluch18} is used. We were interested in comparing this algorithm with the previously developed ones in two cases: all 2-BCs are big, and the 2-BCs are arbitrary. If 2-BCs are arbitrary, then we need to construct big 2-BCs to apply the considered algorithm. Therefore, we propose two different procedures for constructing big 2-BCs followed by applying the common part and call these algorithms $A1$ and $A2$.

\begin{figure}
\centering
\includegraphics[bb= 0 40 700 350, clip, scale=0.4]{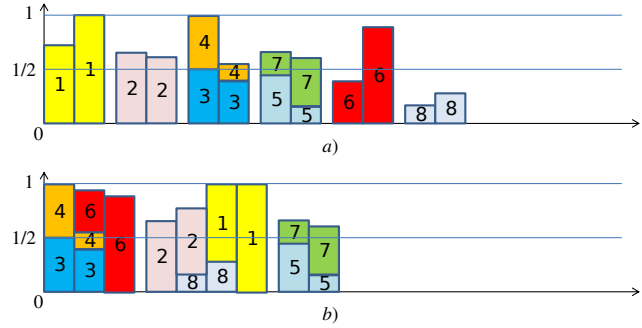}
\caption{$a$) Big BCs built by the first step of $A1$; $b$) Packing built by $A1$.} \label{fig_A1}
\end{figure}

In the algorithm $A1$, we use the first stage of algorithm $A$ described in \cite{Erzin20_2}. This procedure is as follows. Set $M = \emptyset$. The 2-BCs are browsing in numerical order. If the current 2-BC is big, then consider the next one. If both bars do not exceed 1/2 and $M = \emptyset$, then put the current 2-BC in $M$ and continue inspecting. If both bars do not exceed 1/2 and $M \neq \emptyset$, then form a 2-union of current 2-BC and 2-BC in $M$. If the resulting 2-BC is big, then exclude it from $M$ and consider the next 2-BC. If both bars of the resulting 2-BC do not exceed 1/2, leave it in $M$ and continue viewing. As a result of one scan of $S$, all 2-BCs, except possibly one, become big (Fig. \ref{fig_A1}$a$).

The first stage of the algorithm $A2$ consists of the sequential constructing of max-cardinality matchings in the graphs $G_k=(V_k,E_k)$, $k=1,2,\ldots$, where $V_k$ is the set of images of the current 2-BCs, and $(i,j)\in E_k$ if the $i$th and $j$th 2-BCs can form a 2-union (Fig. \ref{fig_A2}$a$). After each matching, we get new 2-BCs and similarly construct a new graph. This stage ends when there are no more 2-unions i.e., $G_k=(V_k,\emptyset)$. Thus, after the first step of the algorithm, there can be only 1-unions.

\begin{figure}
\centering
\includegraphics[bb= 0 40 700 350, clip, scale=0.4]{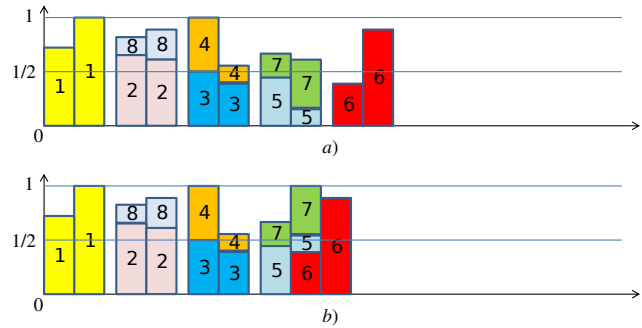}
\caption{$a$) Big BCs built by the first step of $A2$; $b$) Packing built by $A2$.} \label{fig_A2}
\end{figure}

A common part of both algorithms is identical. Using formed 2-BCs, the directed graph $G_1$ is constructed similarly as in the $M_w$. The vertices of this graph are the images of the big 2-BCs. The arc $(i, j)$ belongs to $G_1$ if the $i$th and $j$th 2-BCs can form a 1-union with  $i$th BC on the left. Then a so-called admissible multigraph is constructed \cite{Blaser04,Paluch18}, and its edges are colored with two colors such that the edges in each color class form a collection of node disjoint paths \cite{Blaser04(1)}. Then we choose the paths of one color with a maximal number of edges. This number of edges decreases the packing's length compared to the initial length, equal to $2m$, where $m\leq n$ is the number of big 2-BCs. Fig. \ref{fig_A1}$b$ shows the packing constructed by the algorithm $A1$, and Fig. \ref{fig_A2}$b$ shows the packing constructed by the algorithm $A2$.

\section{Simulation}
In this section, we describe a simulation that we performed on two groups of test data. The first one consists of randomly generated data. For the generation of the instances of the second group, we used the existing collection of the BPP instances with known optimal solutions. The considered algorithms have been implemented in the Python programming language. The calculations are carried out on the computer Intel Core i7-3770 3.40GHz 16Gb RAM.

\subsection{Randomly generated data}
The first group of instances is generated randomly with the different number of 2-BCs $n\in [25,1000]$. For each $n$, 50 different instances are generated. To build an optimal solution to BLP (1)-(3) or to find a lower bound for optimum, we use the IBM ILOG CPLEX 12.10 software package (CPLEX) with a limited running time of 300 seconds.

We examine the approximation algorithms $A1$, $A2$, $M1_w$, $M_w$ and $GA\_LO$ using three test data sets. The first one consists of arbitrary 2-BCs. For each 2-BC $i$, $i=1,\ldots,n$, the heights $a_i$ and $b_i$ take random uniformly distributed values in $(0,1]$. The second data set consists of big 2-BCs. While generating a 2-BC, one of its bars is chosen randomly as big, and its height takes a random uniformly distributed value in $(0.5, 1]$. Another bar's height takes a random uniformly distributed value in $(0, 1]$. The third data set consists of big non-increasing 2-BCs. Its generation is the same as in the second data set, but additionally, the bars of each generated 2-BC swap places if necessary.

\begin{figure}[t]
\includegraphics[width=\textwidth]{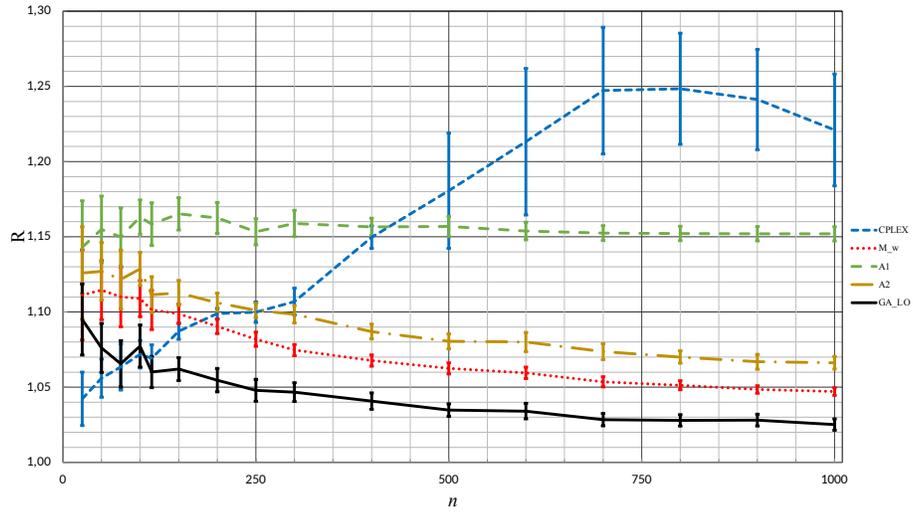}
\caption{Comparison of the algorithms on the randomly generated instances  with arbitrary 2-BCs.} \label{fig6}
\end{figure}

Fig. \ref{fig6} presents the numerical experiment results for the randomly generated instances with arbitrary 2-BCs. The value of $R$ is defined in the following way. If we know the optimum, then $R$ is the ratio $R = obj(X)/OPT$, where $OPT$ is the minimum packing length and $obj(X)$ is the length of the packing built by the algorithm $X\in\{A1, A2, M_w, GA\_LO\}$. On the other hand, if CPLEX fails to find the optimum during the allotted time, then we set $R = obj(X)/LB$, where $obj(X)$ is the objective's value of the approximate solution built by the algorithm $X \in \{$CPLEX$, A1, A2, M_w, GA$\_LO$\} $, and $LB$ is the lower bound of the optimum yielded by CPLEX during the allotted time. The figure shows the mean values and standard deviations (vertical segments) of $R$.

Even for small $n\leq 100$, CPLEX in 300 seconds often builds nonoptimal solution. However, the found by CPLEX approximate packing is near-optimal. In particular, when $n = 25$, CPLEX builds an optimal solution in 34\% of cases, and in 6\% of cases when $n \in \{50, 75\} $. However, when $n\geq 500$, CPLEX builds a significantly worse solution than other considered algorithms. Table \ref{table:1} shows minimum (min), maximum (max), and average (av) values of upper bounds on the \emph{absolute errors}, i.e., the differences between the algorithm's objective values and the lower bounds provided by CPLEX. The best values are marked in bold. For example, when $n = 1000$, CPLEX builds solutions with average absolute error 221.0, while the average absolute errors of the approximation algorithms $M_w$, $M1_w$, $A1$, $A2$, and $GA\_LO$ are 47.1, 47.4, 152.1, 66.3 and 25.2, correspondingly. As one sees in Table \ref{table:1} and Fig. \ref{fig6}, the algorithm $GA\_LO$  always builds the best solution among all considered algorithms. In most cases, algorithms $A2$ and $M_w$ turn out to be better than the algorithm $A1$. In about 95\% of cases, the first matching (the result of $M1_w$) turns out to be the only one in the algorithm $M_w$. It is important to note that all the tested approximation algorithms work fast enough for each $n\in [25,1000]$. The algorithm $GA\_LO$ appeared to be the fastest one. It always solves the problem in less than 1 second. Other approximation algorithms $M_w, A1$ and $A2$ need more running time, especially in the large-size cases. The average running time of these algorithms is about one minute when $n = 1000$.

Tables \ref{table:2} and \ref{table:3} present the simulation results for big 2-BCs and big non-increasing 2-BCs, respectively. Since all approximation algorithms yield almost the same values of $R$ in this case, we decide to include only algorithms $GA\_LO$ and CPLEX into the tables. Starting from $n = 250$, CPLEX builds a worse solution than all other considered algorithms. Therefore, we can conclude that on such a 2-BCs set, all algorithms work well, and we no longer observe such a difference in $R$ values as in the case with arbitrary 2-BCs (Fig. \ref{fig6}). For example, when $n \geq 250$, the average value of $R$ of all considered algorithms is about 1.2.

\begin{table}[h!]
\scriptsize{
\centering
\setlength{\tabcolsep}{0.6pt}
{
\begin{tabular}{|c|c|c|c|c|c|c|c|c|c|c|c|c|c|c|c|c|c|c|}
\hline
    \multirow{2}{*}{$n$} & \multicolumn{3}{|c|}{CPLEX} &  \multicolumn{3}{|c|}{$M_w$} & \multicolumn{3}{|c|}{$M1_w$} & \multicolumn{3}{|c|}{$A1$}
     & \multicolumn{3}{|c|}{$A2$} & \multicolumn{3}{|c|}{$GA\_LO$} \\
\cline{2-19}
 & $min$ & $max$& $av$ & $min$ & $max$& $av$ &$min$ & $max$& $av$ &$min$ & $max$& $av$ &$min$ & $max$& $av$ &$min$ & $max$& $av$ \\

\hline

25  &0 &3 &{\textbf{1.0}}	&0	&5	&2.8	&1	&6	&3.9	&0	&6	&3.6	&1	&7	&3.2	&0	&5   &2.4 \\
50  &0 &5 &{\textbf{2.8}}	&1	&9	&5.8	&3	&10	&6.5	&2	&11	&7.9	&2	&10	&6.4	&0	&7   &3.9 \\
75  &0 &9 &{\textbf{4.8}}	&1	&14	&8.3	&1	&14	&8.9	&5	&16	&11.4	&1	&15	&9.2	&0	&12  &5.0 \\
100  &1 &12 &{\textbf{7.2}}	&2	&15	&10.9	&2	&16	&11.3	&10	&23	&16.4	&9	&19	&12.9	&2	&15  &7.8 \\
115  &1 &11 &7.9	&2	&16	&11.6	&2	&18	&12.0	&11	&24	&18.2	&5	&19	&12.8	&2	&15  &{\textbf{6.9}} \\
150  &10 &18 &13.1 &12 &20 &14.8 &12 &20 &15.1 &18 &31 &24.7 &12 &25 &16.9 &5	&16  &{\textbf{9.3}} \\
200  &16 &25 &19.8 &14 &23 &18.1 &14 &23 &18.2 &24 &42 &32.5 &16 &27 &21.3 &5	&19  &{\textbf{11.0}} \\
250  &18 &39 &25.0 &16 &26 &20.5 &16 &27 &20.8 &27 &47 &38.4 &19 &34 &25.3 &5	&20  &{\textbf{12.1}} \\
300  &23 &51 &32.2 &18 &27 &22.5 &18 &27 &22.5 &37 &61 &47.8 &23 &38 &29.7 &8	&24  &{\textbf{14.1}} \\
400  &48 &87 &60.0 &21 &33 &27.1 &21 &33 &27.3 &52 &71 &62.7 &28 &50 &34.9 &8	&29  &{\textbf{16.4}} \\
500  &61 &188 &89.9 &24 &39 &31.1 &24 &39 &31.1 &60 &95 &78.1 &30 &54 &40.1 &9 &31  &{\textbf{17.3}} \\
600  &70 &255 &128.1 &28 &50 &35.7 &29 &50 &36.0 &76 &108 &92.3 &32 &72 &48.1 &11 &38  &{\textbf{20.5}} \\
700  &78 &258 &173.1 &27 &49 &37.5 &28 &49 &37.7 &93 &120 &106.8 &37 &66 &51.6 &12 &28  &{\textbf{19.9}} \\
800  &92 &287 &198.6 &34 &56 &41.0 &34 &56 &41.1 &102 &136 &121.7 &37 &73 &56.0 &11 &39  &{\textbf{22.4}} \\
900  &105 &335 &217.4 &34 &55 &43.7 &34 &56 &43.9 &119 &159 &137.0 &41 &88 &60.4 &14 &54  &{\textbf{25.4}} \\
1000  &112 &343 &221.0 &39 &64 &47.1 &39 &64 &47.4 &129 &175 &152.1 &53 &93 &66.3 &13 &45  &{\textbf{25.2}} \\

\hline

\hline
\end{tabular}
\caption{Absolute errors of the algorithms on the randomly generated instances with arbitrary 2-BCs.}
\label{table:1}
}}
\end{table}

\begin{table}[h]
%\scriptsize{
\centering
\begin{tabular}{|c|c|c|c|c|}
\hline
\multirow{2}{*}{$n$} & \multicolumn{2}{c|}{CPLEX}     & \multicolumn{2}{c|}{\textit{$GA\_LO$}} \\ \cline{2-5}
        & \textit{Rav}    & \textit{Rsd} & \textit{Rav}       & \textit{Rsd}    \\ \hline
25      & \textbf{1.003}  & 0.011        & 1.013             & 0.02          \\ \hline
50      & \textbf{1.003} & 0.008       & 1.009           & 0.01          \\ \hline
75      & \textbf{1.004} & 0.007       & 1.008             & 0.008          \\ \hline
100     & \textbf{1.003} & 0.004       & 1.006             & 0.005          \\ \hline
250     & 1.207           & 0.014       & \textbf{1.203}    & 0.015          \\ \hline
500     & 1.242          & 0.025       & \textbf{1.203}    & 0.01          \\ \hline
750     & 1.237          & 0.01       & \textbf{1.201}    & 0.01          \\ \hline
1000    & 1.234          & 0.01       & \textbf{1.202}    & 0.007          \\ \hline
\end{tabular}
%}
\caption{Comparison of the algorithms on the randomly generated instances when all 2-BCs are big ($R_{av}$ is the mean value; $R_{sd}$ is the standard deviation).}
\label{table:2}
\end{table}

\begin{table}[h]
%\scriptsize{
\centering
\begin{tabular}{|c|c|c|c|c|}
\hline
\multirow{2}{*}{$n$} & \multicolumn{2}{c|}{CPLEX}     & \multicolumn{2}{c|}{\textit{$GA\_LO$}} \\ \cline{2-5}
        & \textit{Rav}    & \textit{Rsd} & \textit{Rav}       & \textit{Rsd}    \\ \hline
25      & \textbf{1.005}  & 0.013        & 1.022              & 0.017          \\ \hline
50      & \textbf{1.017}  & 0.028       & 1.025             & 0.025          \\ \hline
75      & \textbf{1.018} & 0.02       & 1.025             & 0.02          \\ \hline
100     & \textbf{1.022} & 0.043       & 1.026             & 0.042          \\ \hline
250     & 1.235          & 0.018       & \textbf{1.216}     & 0.014          \\ \hline
500     & 1.26          & 0.021       & \textbf{1.21}    & 0.009          \\ \hline
750     & 1.247          & 0.012       & \textbf{1.208}    & 0.009           \\ \hline
1000    & 1.247          & 0.008       & \textbf{1.204}    & 0.009          \\ \hline
\end{tabular}
%}
\caption{Comparison of the algorithms on the randomly generated instances when all 2-BCs are big and non-increasing ($R_{av}$ is the mean value; $R_{sd}$ is the standard deviation).}
\label{table:3}
\end{table}

Subsequently, we conclude that $GA\_LO$ is preferable among all tested algorithms for the input data with random uniformly distributed parameters. Starting from $n = 100$, it builds more accurate solutions than other algorithms. However, considering the particular cases of the 2-BCs (second and third data sets), it is hard to single out the most efficient algorithm. In these cases, the only definite advantage of the algorithm $GA\_LO$ is its running time, which for all $n$ does not exceed 1 second. For instance, when $n=1000$, the algorithm $GA\_LO$ builds solutions in 0.37 seconds.

As a result of this part of the simulation, we conclude that the algorithm $A2$ turns out to be better than $A1$.  However, the algorithm $M_w$ solves the problem a little accurately than $A2$. The difference between $R$ values of $M_w$ and $A2$ is about 0.02 for $n \in \{700, 800, 900, 1000\} $. The difference between $R$ values of $A2$ and $GA\_LO$ is about 0.04 for the same $n$. The running time of the algorithms $A2$ and $M_w$ is almost the same, and on average, for $n = 1000$, they build solutions in 68 and 61 seconds, respectively.

To make our results reproducible, we have uploaded all the instances to the cloud storage publicly available at the link:\\ https://disk.yandex.ru/d/sb1IqReFpUu7dg.

\subsection{Data generated from the existing BPP benchmarks}
Another group of test cases in our simulation is based on the Bin Packing Problem (BPP) instances with known optimal solutions. In \cite{Delorme} authors provide a review on the most important mathematical models and algorithms developed for the solution of the BPP. In the experimental part they randomly generate 3840 instances with different number of items (50, 100, 200, 300, 400, 500, 750, 1000) and  bins capacities (50, 75, 100, 120, 150, 200, 300, 400, 500, 750, 1000). We use these instances because they are available online as well as their optimal solutions. For each instance of BPP, we generate the instance for 2-BCPP in the following way. At first, given an optimal solution of BPP, the bins (cells) are sorted in non-decreasing order quantities of items in the bin. Let $N$ be the number of the bins in optimal packing, and for each $i = 1, \dots, N$, let $n_i$ be the number of items in bin $i$. Then the 2-BCs for 2-BCPP are generated as follows. $n_1$ items in the first and the second bins create $n_1$ 2-BCs, which form 2-unions. To form one 2-BC, we take any item from the first bin and any item from the second bin. $n_2 - n_1$ items from the second and the third bins form the next 2-BCs in the same manner. The remaining $n_3 - (n_2-n_1)$ items of the third bin create 2-BCs with items from the fourth bin, and so on. The last bin $n_N$ may contain items that were not used in the constructed 2-BCs. Such items are removed, and the optimum for 2-BCPP becomes equal to $N-1$. Since the size of any bin and item in any BPP's instance may be arbitrary, the height of each bar in 2-BCPP is divided by the bin's capacity. As a result, for each $n \in \{25, 50, 100, 150, 200, 250, 375, 500\}$, we generate 480 different test instances with known optimal solution.

Let \emph{ratio} be $obj(X)/OPT$, where $OPT$ is the optimum and $obj(X)$ is the objective's value found by the algorithm $X\in\{M_w, A1, A2, GA\_LO\}$. Fig. \ref{fig7} presents the average and standard deviation values of the ratio for the BPP benchmarks-based instances. Again, as in the randomly generated instances, the algorithm $GA\_LO$ outperforms all other algorithms in accuracy and running time. Average ratio equals 1.034 for almost every value of $n$, except $25$ and $500$, where $R_{av}$ equals 1.037 and 1.035, correspondingly. Algorithms $M_w$ and $A2$ build similar solutions. For example, when $n=500$, the ratio is 1.11 for both of them. Algorithm $M_w$ in 34\% of cases performs only one iteration (builds one matching). $A1$ turns out to be the worst. For example, when $n=500$ its average ratio is 1.371, while average ratio for $M1_w$ equals 1.181.

Additionally, we analyze the absolute errors. Table \ref{table:4} presents the difference of the objective's values yielded by the approximation algorithms and optimum. In the majority of cases, the average absolute error of $GA\_LO$ appeared to be the smallest. For example, for $n = 500$, the average absolute errors of the approximation algorithms $M_w$, $M1_w$, $A1$, $A2$, and $GA\_LO$ equal 54.7, 80.1, 181.8, 52.8 and 15.6, correspondingly. Incidentally, for $n \in \{25, 50, 100$\}, $GA\_LO$ builds the optimal solutions in 40\%, 26.3\% and 14.8\% of cases, correspondingly. However, when $n \geq 150$, the algorithm $M_w$ builds an optimal solution more frequently than other algorithms. For example, when $n=500$, $M_w$ builds the optimal solution in 12.5\% of cases versus 10.6\% by $GA\_LO$ and 11.3\% by $A2$.

Consequently, we can conclude that all considered approximation algorithms solve the test instances obtained by the known BPP instances well enough. In many cases, optimal solutions were built. In the rest of the cases, the algorithms yield near-optimal solutions with a ratio close to 1. Like in the previous subsection, we have to state that algorithm $GA\_LO$ turns out to be the most beneficial.

\begin{figure}[t]
\includegraphics[width=\textwidth]{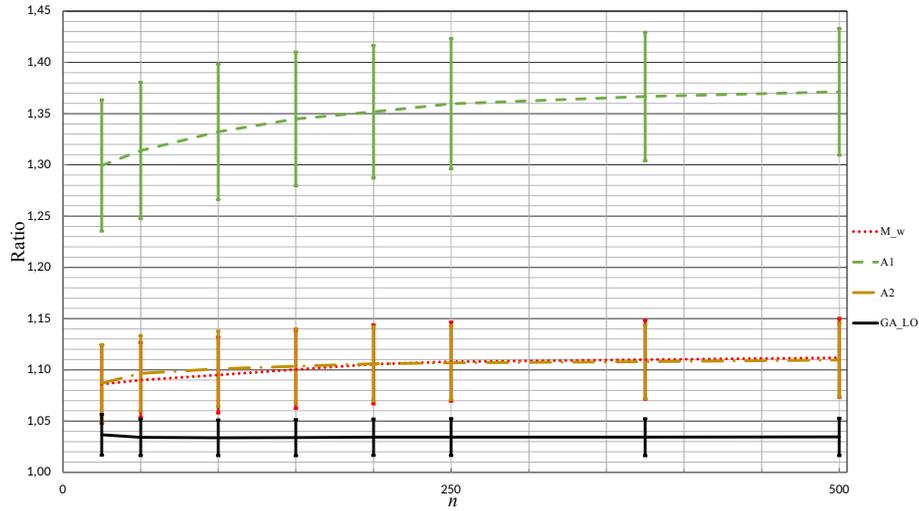}
\caption{Comparison of the algorithms on the BPP benchmarks based instances.} \label{fig7}
\end{figure}

\begin{table}[h!]
%\scriptsize{
\centering
\setlength{\tabcolsep}{0.6pt}
{
\begin{tabular}{||c||c|c|c|c|c|c|c|c|c|c||}
\hline
    \multirow{2}{*}{$n$} & \multicolumn{2}{|c|}{$M_w$} & \multicolumn{2}{|c|}{$M1_w$} & \multicolumn{2}{|c|}{$A1$}
     & \multicolumn{2}{|c|}{$A2$} & \multicolumn{2}{|c|}{$GA\_LO$} \\
\cline{2-11}
 &  $max$& $av$ &  $max$& $av$ & $max$& $av$ & $max$& $av$ & $max$& $av$  \\

\hline

25   &7 &1.9   &11 &3.7  &12 &7.0  &6 &1.9   &5 &{\textbf{0.8}} \\
50   &11 &4.1   &24 &9.0  &24 &15.0  &12 &4.7   &6 &{\textbf{1.4}} \\
100  &26 &9.2   &49 &16.2  &48 &33.7  &23 &9.6   &14 &{\textbf{2.9}} \\
150  &39 &15.2   &74 &24.2  &72 &52.2  &33 &14.5   &23 &{\textbf{4.4}} \\
200  &54 &22.1   &99 &34.1  &98 &68.6  &43 &20.1   &27 &{\textbf{6.2}} \\
250  &62 &26.4   &124 &38.4  &118 &90.3  &63 &24.1   &39 &{\textbf{7.5}} \\
375  &91 &39.6   &183 &58.2  &177 &138.2  &79 &37.3   &55 &{\textbf{10.7}} \\
500  &120 &54.7   &249 &80.1  &235 &181.8  &103 &52.8   &83 &{\textbf{15.6}} \\

\hline

\hline
\end{tabular}
\caption{Upper bounds on the absolute errors of the algorithms on the BPP benchmarks based instances.%}
\label{table:4}
}}
\end{table}

\section{Conclusion}
We have tested several approximation algorithms with known guaranteed estimates to solve the strongly NP-hard problem of packing two-bar charts into the strip of minimal length. This problem is new for the optimization community. It is a generalization of the Bin Packing Problem, Strip Packing Problem, and 2-Dimensional Vector Packing Problem and a special case of the resource-constrained project scheduling problem where the jobs consume one renewable resource. For the earlier considered algorithms, we have found the guaranteed accuracy estimates. This paper performs a simulation using the randomly generated instances of different dimensions to compare the approximation algorithms and CPLEX for BLP and instances with known optimum. The numerical experiment shows the high efficiency of the greedy algorithm with the preliminary lexicographic ordering of the 2-BCs ($GA\_LO$) proposed firstly in \cite{Erzin20_2}, which significantly outperforms other algorithms in accuracy and runtime.

In this paper, we limited ourselves to approximation algorithms with guaranteed accuracy estimates. Furthermore, we deliberately estimate how much the accuracy of the considered algorithms is higher on average than the guaranteed accuracy. In future research, we are planning to test various other heuristics and metaheuristics, for which a priori accuracy estimates are not necessarily known.

\end{document}